\definecolor{lavender}{rgb}{0.9, 0.9, 0.98}
\newtheorem{theorem}{Theorem}
\newtheorem{remark}{Remark}
\newtheorem{corollary}{Corollary}[theorem]
\begin{document}

%\title{Outage Probability Analysis of RIS-aided MISO Systems}
\title{Capacity Analysis of Cascaded BD-RIS Assisted MIMO Systems}
\author{M. S. S. Manasa,  Praful D. Mankar, and Sundaram Vanka% <-this % stops a space
\thanks{M. S. S. Manasa, and P. D. Mankar are with Signal Processing and Communication Research Center, IIIT Hyderabad, India. Email: mss.manasa@research.iiit.ac.in, praful.mankar@iiit.ac.in. \newline S. Vanka is with department of electrical engineering at Indian Institute of Technology, Hyderabad. Email: sundar.vanka@ee.iith.ac.in} % <-this % stops a space
}

\maketitle

%------------------------------------------------------------
% Abstract
\begin{abstract}
This paper examines the cascaded deployment of beyond diagonal (BD) reconfigurable intelligent surfaces (RISs) and explores its potential to enhance the performance of MIMO systems.
We first derive the jointly optimal closed-form solutions for the RISs in cascade with SVD–water-filling (SVD–WF) and uniform power allocation (UPA) precoding strategies.
The optimally configured cascaded-RIS with UPA is shown to achieve performance comparable to that with the SVD–WF approach, suggesting that cascaded-RISs can also aid in reducing transmitter complexity.
Furthermore, the approximate ergodic capacity for UPA is derived, along with its high-SNR approximation which provides multiple useful insights into the dimension and deployment of cascaded RISs. 
The analytical results establish a clear tradeoff among transmit power, RIS size, and achievable capacity, providing insights for practical deployment in high-SNR cascaded-RIS MIMO systems. 
\end{abstract}

\begin{IEEEkeywords}
BD-RIS, cascaded RIS, MIMO, ergodic capacity, low-complexity design.
\end{IEEEkeywords}

%------------------------------------------------------------
%Introduction
\section{Introduction}\label{sec : Intro}
Reconfigurable Intelligent Surfaces (RIS) have emerged as a transformative technology in wireless communications, offering control over the propagation environment. These surfaces consist of a large array of reflecting elements which can be dynamically configured to manipulate the phase, amplitude, or polarization of incident EM waves, thereby enabling intelligent signal redirection \cite{basar2019wireless,wu2019towards}. 
Thus, RIS provides an attractive solution to meet the coverage and capacity requirements of future wireless networks. 
This paradigm shift has led to a surge in research focused on multi-RIS-assisted MIMO systems, beamforming optimization and the capacity analysis, which also is the focus of this paper.
\subsubsection{RIS Architectures --- Diagonal and Beyond-Diagonal}
The originally conceptualized design of RIS comprised of independently tunable reflecting elements modeled by a diagonal matrix of reflection coefficients \cite{Diagonal_RIS , Diagonal_RIS_survey}, and thus termed {\em diagonal RIS} (D-RIS). While such RISs are energy-efficient and low-cost, they suffer from (i) unit-modulus constraints which complicate optimization, especially in MU-MIMO systems \cite{zhang2021capacity}, and (ii) limited degrees of freedom (DoF) due to the lack of inter-element interaction \cite{BD_RIS}. To address these limitations, a new design has emerged where elements can be controlled jointly \cite{BD_RIS_Intro}. Due to the inter-element coupling, the RIS response is represented by a non-diagonal full or block matrix, leading to the concept of {\em beyond-diagonal RISs (BD-RISs)} with a unitary constraint. This additional DoF in configuring RIS enhances flexibility in EM wave manipulation and improves system capacity \cite{BD_RIS_Intro_to_optimization}. The unitary constraint of phase shift matrix enables low-complexity optimal phase design, as shown in \cite{emil2025capacity} and is further explored in this paper.
\subsubsection{RIS Deployment --  Multi-RIS Systems}
Beyond single-RIS systems, multi-RIS deployments have gained attention for improving diversity and coverage wherein the two key RIS arrangements are parallel or cascade \cite{parallel_cascaded_RIS}. Parallel RISs enhance spatial DoF. Cascaded RISs extend coverage and array gain which improves end-to-end SNR \cite{Cascade_RIS}. Ultimately, cascaded arrangement shall be the focus of this paper. However, existing works on multi-RIS systems are limited to D-RISs, leaving multi BD-RIS systems unexplored despite their potential to achieve high performance with reduced complexity. 
\subsubsection{Optimal Configuration for RIS systems}
A key research direction in RIS-assisted MIMO systems is the joint optimization of transceiver precoding and RIS phase shifts to maximize utility functions such as capacity, energy efficiency, etc. These optimization formulations, often non-convex and challenging to solve, are fundamentally influenced by the constraints imposed by the RIS architectures  \cite{RIS_arch1}. There have been many efforts in literature on optimizing the single/multi D-RIS aided systems. However, these works mostly provide iterative solutions with high computational complexity, with the exception of \cite{zhang2021capacity} whose authors derived closed form expressions for RIS phase shifts that still needs to be solved iteratively. 
In contrast, there have been fewer efforts to optimize BD-RIS systems. In \cite{joint_ActivePassive_BD-RIS,BD_RIS_Intro_to_optimization}, fractional programming based frameworks are developed, whereas \cite{emil2025capacity} derives a closed form expression for optimal single BD-RIS phase shift matrix.
However, the optimization of multi-BD-RIS  systems is yet to be explored. Finally, because of the lack of tractable closed-form optimal solutions for phase shift matrices, the characterization of ergodic capacity (EC) for D/BD single/multi RIS-aided MIMO system still remains an open problem.

{\em Contributions:} Motivated by the above gap, {we focus on finding the optimal RIS phase shift matrices for cascaded BD-RIS systems that ensure low computational complexity solutions and also aid in analytical tractability for EC analysis.}
The main contributions are summarized as follows:
\begin{itemize}
    \item Closed-form expressions are derived for the optimal phase shift matrices for cascaded RISs-MIMO system.
    
    \item The proposed optimal RIS solution with UPA precoding is shown to attain performance comparable to that with SVD–WF precoding, simplifying the transmitter design.
   
   \item An approximate closed-form expression for EC is derived which is simplified in high-SNR to reveal how capacity scales with the number of cascaded RISs and their dimensions.
\end{itemize}

%------------------------------------------------------------
%Sytem Model
\section{System Model}\label{Sys_Model}
This work considers a multi-RIS assisted MIMO communication system where the RISs are placed in {\em cascade} arrangement between the transmitter and users. % as shown Fig. \ref{fig:sys_model}.
This setting consists of a base station (BS) equipped  with $M$ transmit antennas, $K$ single-antenna users, and $L$ RISs placed sequentially between the BS and users such that the $l$-th RIS has $N_l$ number of reflecting elements. 
This setup includes  $L+1$ sequential links: {\rm BS--RIS$_1$}, {\rm RIS$_1$--RIS$_2$}, \dots, {\rm RIS$_{\rm L-1}$--RIS$_{\rm L}$}, and {\rm RIS$_{\rm L}$--Users}.
The phase configuration of $l$-th RIS is denoted by an $N_l\times N_l$ matrix  $\mathbf{\Phi}_l$ such that $\mathbf{\Phi}_l\mathbf{\Phi}_l^{\rm H}=\mathbf{I}_{N_l}$ where $\mathbf{I}_{N_l}$ is the identity matrix.
Let the channels from the BS to RIS$_1$ be denoted as $\mathbf{H}_{0} \in \mathbb{C}^{N_1 \times M}$, from RIS$_{l}$ to RIS$_{l+1}$ be $\mathbf{H}_{l} \in \mathbb{C}^{N_{l+1} \times N_l}$, and from RIS$_{\rm L}$ to the users be $\mathbf{H}_{\rm L} \in \mathbb{C}^{K \times N_L}$. Thus, the composite cascaded end-to-end channel can be written as
\begin{equation}
    \mathbf{H}_{\rm cas} = \mathbf{H}_{\rm L}\mathbf{\Phi}_{\rm L}\mathbf{H}_{\rm L-1} \mathbf{\Phi}_{\rm L-1} \cdots \mathbf{H}_{2} \mathbf{\Phi}_2 \mathbf{H}_1 \mathbf{\Phi}_1 \mathbf{H}_{0}.
\end{equation}
Let $\mathbf{x} \in \mathbb{C}^{M \times 1}$ be the transmitted signal vector defined as
\begin{equation}
    \mathbf{x} = \sum_i \sqrt{p_i} \mathbf{v}_i s_i,
\end{equation}
where $s_i$ is the modulation symbol (such that $\mathbb{E}[|s_i|^2] = 1$), $\mathbf{v}_i \in \mathbb{C}^{M \times 1}$ is the precoding vector (such that $\|\mathbf{v}_i\|_2 = 1 $), and $p_i$ is the transmit power allocated to $i$-th symbol.
Thus, the received signal vector can be written as
% \begin{equation}
%     \mathbf{y} =  \mathbf{\textit{l}}\left(\mathbf{d}_c\right)\mathbf{H}_{\rm cas}\mathbf{x} + \mathbf{n}, \label{eq:series_received}
% \end{equation}
\begin{equation}
    \mathbf{y} =  \mathbf{H}_{\rm cas}\mathbf{x} + \mathbf{n}, \label{eq:series_received}
\end{equation}
where $\mathbf{n} \in \mathbb{C}^{K \times 1}$ is an additive white Gaussian noise (AWGN) vector with zero mean and covariance matrix $\sigma_n^2 \mathbf{I}_{\rm K}$. %$\mathbf{\textit{l}}\left(\mathbf{d}_c\right)$ is the cascaded pathloss function. Here, $\mathbf{d}_c = \left[d_0,d_1,\ldots,d_{\rm L}\right]$ such that $d_0$ is the distance between the BS and RIS1, $d_{\rm L}$ is the distance between the L-th RIS and user, and $d_i$ $\forall$ $i = 2,\ldots,L-1$ is the distance between the $i$-th and $i+1$-th RISs.
The capacity under the cascade arrangement having composite cascaded channel $\mathbf{H}_{\rm cas}$ is
\begin{equation}
    \mathrm{C}_{\rm cas} = \log_2|\mathbf{I}_{\rm K} + \mathbf{H}_{\rm cas}\mathbf{QH}_{\rm cas}^{\rm H}|,\label{eq:MIMO_Capacity_cas}
\end{equation}
where $\mathbf{Q} = \mathbb{E}[\mathbf{xx}^{\rm H}]$ is the transmit precoding matrix. Assuming perfect knowledge of CSI, the aim is to appropriately configure the RIS phase shift matrices $\{\mathbf{\Phi}_1,\mathbf{\Phi}_2,\ldots,\mathbf{\Phi}_{\rm L}\}$ and precoding matrix $\mathbf{Q}$ such that the cascaded transmission capacity is maximized. 
Thus, we formulate the cascaded capacity maximization problem as
\begin{subequations}
\begin{align}
\textbf{(P1)}:
\max_{\stackrel{\boldsymbol{\Phi}_{\rm 1},\ldots,\boldsymbol{\Phi}_{\rm L}}{\mathbf{Q}}} ~~& \log_2|\mathbf{I}_{\rm K} + \mathbf{H}_{\rm cas}\mathbf{QH}_{\rm cas}^{\rm H}|,\label{eq:Capacity_objective_cas}\\
\textit{\emph{s.t}} ~~& \mathrm{Tr}(\mathbf{Q}) \leq P_t,\label{eq:power_constraint_cas}\\
\textit{\emph{and}} ~~&  \boldsymbol{\Phi}_l\boldsymbol{\Phi}_l^{\rm H} = \mathbf{I}_{N_l}, ~~~\forall l=1,\dots, L.\label{eq:Phi_constraint_cas}
\end{align}\label{eq:Objective_cas}
\end{subequations}
where \eqref{eq:power_constraint_cas} limits the transmission power and \eqref{eq:Phi_constraint_cas} ensures that the phase shift matrices for BD-RISs are unitary. Here, the maximum achievable capacity $\mathrm{C}^*_{\rm cas}$ obtained using the optimal RIS configuration and transmit precoding corresponds to a given channel realization.
We derive accurate closed-form expressions for the  characterization of EC $\mathrm{EC}_{\rm cas}=\mathbb{E}[\mathrm{C}^*_{\rm cas}]$ and utilize them to gain deeper insights into the performance behavior of cascaded RIS architectures.

% --------------------------------------------------------
% %Optimal RIS Phase shift
\section{ Optimal RIS Phase shift matrix}
The capacity maximization problem presented in \eqref{eq:Objective_cas} is equivalent to a determinant maximization problem and can be expressed as maximization of $\prod_{i=1}^{R}\Lambda_i$ subjected to constraints \eqref{eq:power_constraint_cas} and \eqref{eq:Phi_constraint_cas}, 
where $R$ represents the rank of the matrix $\mathbf{H}_{\rm cas }\mathbf{Q}\mathbf{H}_{\rm cas}^{\rm H}$ and $\Lambda_i$s are its eigenvalues. This quantity is maximized when all the eigenvalues are equal i.e. $\Lambda_i=\Lambda_j$ for $\forall i,j$ under feasible solution set. Thus, because of AM-GM inequality \cite{amgm}, the optimal solution should satisfy 
\begin{equation}
    |\mathbf{H}_{\rm cas }\mathbf{Q}\mathbf{H}_{\rm cas}^{\rm H}|=\frac{1}{R}{\rm Tr}(\mathbf{H}_{\rm cas }\mathbf{Q}\mathbf{H}_{\rm cas}^{\rm H})^{\rm R}.
\end{equation}
This equivalence establishes that maximizing the determinant is directly linked to maximizing the trace, leading to a more tractable formulation of the original problem. 
Hence, the problem \textbf{(P1)} can be reformulated as 
\begin{subequations}
\begin{align}
& \max_{\stackrel{\boldsymbol{\Phi}_{\rm 1},\ldots,\boldsymbol{\Phi}_{\rm L}}{\mathbf{Q}}} ~ \mathrm{Tr}(\mathbf{H}_{\rm cas} \mathbf{Q}\mathbf{H}_{\rm cas}^{\rm H}),\label{trace_max}\\
& \text{ {s.t}} ~ \eqref{eq:power_constraint_cas} \text{~and~} \eqref{eq:Phi_constraint_cas}.
\end{align}\label{Trace_objective}
\end{subequations}
This transformation simplifies the problem from determinant to trace maximization, which is then solved through two subproblems: optimal precoding and RIS phase design.

\subsubsection{Optimal Precoding}\label{Sec: Precoding_cas}
In this subproblem, we optimize the precoding matrix $\mathbf{Q}$ for a given phase shift matrix $\mathbf{\Phi}$. The optimal $\mathbf{Q}$ is given by the eigenmode transmission \cite{tse2005fundamentals}. Taking SVD of $\mathbf{H}_{\rm cas} = \mathbf{U}_{\rm cas}\mathbf{S}_{\rm cas}\mathbf{V}_{\rm cas}^{\rm H}$, optimal $\mathbf{Q}$ can be selected as
\begin{equation}
    \mathbf{Q}^* = \mathbf{V}_{\rm cas}\left(p_1^*,\ldots,p_R^*\right)\mathbf{V}_{\rm cas}^{\rm H},\label{eq:svd-wf}
\end{equation}
where $R$ is the rank of $\mathbf{H}_{\rm cas}$, and $p_r^*$ denotes the optimal power allocated to the $r$-th data stream. The optimal power can be obtained using the water-filling strategy given in \cite{tse2005fundamentals} as
\begin{align}
    p_r^*=\left({1}/{\mu}-{\sigma_n^2}/{\lambda_r}\right)^+,\label{eq:WF}% %\left(\frac{1}{\mu}-\frac{2\sigma_n^2}{\lambda__r}\right)^+,
\end{align}
where $\lambda_r$ is the $r$-th eigenvalue of $\mathbf{H}_{\rm cas}$ and $\mu$ is a Lagrangian variable that satisfies the power constraint i.e. $\sum_r p_r^* = P_t$.

\subsubsection{Optimal RIS phase shift}
Here, we aim to optimize $\mathbf{\Phi}$ given $\mathbf{Q}$. For easier exposition, we consider the case $L=2$ and later extend it to provide a generalized solution. Combining the BS--RIS$_1$ channel and the precoding, we can write $\Tilde{\mathbf{H}}_{0} = \mathbf{H}_{0} \mathbf{Q}^{\frac{1}{2}}$, and utilizing the cyclic property of trace, we can rewrite the trace expression in \eqref{Trace_objective} as
\begin{equation}
    \mathrm{Tr}\left(\mathbf{H}_{\rm cas}\mathbf{Q}\mathbf{H}_{\rm cas}^{\rm H}\right) = \mathrm{Tr}\left(\mathbf{H}_{2}^{\rm H}\mathbf{H}_{2}\mathbf{\Phi}_2\mathbf{H}_1\mathbf{\Phi}_1\Tilde{\mathbf{H}}_{0}\Tilde{\mathbf{H}}_{0}^{\rm H}\mathbf{\Phi}_1^{\rm H}\mathbf{H}_1^{\rm H}\mathbf{\Phi}_2^{\rm H}\right).\nonumber
\end{equation}
Taking SVD of $\Tilde{\mathbf{H}}_{0} = \Tilde{\mathbf{U}}_{0}\Tilde{\mathbf{S}}_{0}\Tilde{\mathbf{V}}_{0}^{\rm H}$, $\mathbf{H}_1 = \mathbf{U}_1\mathbf{S}_1\mathbf{V}_1^{\rm H}$, and $\mathbf{H}_{2} = \mathbf{U}_{2}\mathbf{S}_{2}\mathbf{V}_{2}^{\rm H}$, the above expression can be rewritten as 
{\small \begin{align}
    \mathrm{Tr}\left(\mathbf{U}_1^{\rm H}\mathbf{\Phi}_2^{\rm H}\mathbf{V}_2\mathbf{S}_2^2\mathbf{V}_2^{\rm H}\mathbf{\Phi}_2\mathbf{U}_1\mathbf{S}_1\mathbf{V}_1^{\rm H}\mathbf{\Phi}_1\Tilde{\mathbf{U}}_{0}\Tilde{\mathbf{S}}_{0}^2\Tilde{\mathbf{U}}_{0}^{\rm H}\mathbf{\Phi}_1^{\rm H}\mathbf{V}_1\mathbf{S}_1^{\rm H}\right), \label{eq: trace_ser_svd_exp}
\end{align}}
Using Von Neumann’s trace inequality \cite{VonNeumann_trace}, we can write
\begin{equation}
    \max\left[\mathrm{Tr}(\mathbf{H}_{\rm cas}\mathbf{Q}\mathbf{H}_{\rm cas}^{\rm H})\right] \leq \mathrm{Tr}\left(\Tilde{\mathbf{S}}_{0}^2\mathbf{S}_1^2\mathbf{S}_2^2\right). \label{eq:VN_bound}
\end{equation}
This is attributed to the fact that all the other matrices involved in \eqref{eq: trace_ser_svd_exp} are unitary. The maximum in \eqref{eq:VN_bound} is achieved upon setting the RIS phase shift matrices as $\boldsymbol{\Phi}_1 = \mathbf{V}_1\Tilde{\mathbf{U}}_{0}^{\rm H}$ and $ \boldsymbol{\Phi}_2^*  = \mathbf{V}_{2}\mathbf{U}_1^{\rm H}$. Notably, this configuration of the RIS phase shift matrices also satisfies the unitary matrix constraints $\boldsymbol{\Phi}_1\boldsymbol{\Phi}_1^{\rm H} = \mathbf{I}$ and $\boldsymbol{\Phi}_2\boldsymbol{\Phi}_2^{\rm H} = \mathbf{I}$. Thus, the optimal RIS phase shift matrices maximizing the trace, while satisfying the constraints, are 
\begin{subequations}
\begin{align}
    ~~& \boldsymbol{\Phi}_1^* = \mathbf{V}_1\Tilde{\mathbf{U}}_{0}^{\rm H},\label{eq:Phi1_cas_opt}\\
    \textit{\emph{and}} ~~& \boldsymbol{\Phi}_2^*  = \mathbf{V}_{2}\mathbf{U}_1^{\rm H}.\label{eq:Phi2_cas_opt}
\end{align}
\end{subequations}
For the above derived $\mathbf{Q}^*$ and $\boldsymbol{\Phi}_1^*$ and $\boldsymbol{\Phi}_2^*$, the determinant in the capacity equation \eqref{eq:MIMO_Capacity_cas} can be obtained as $|\mathbf{I}_{\rm K} + \mathbf{H}_{\rm cas}\mathbf{Q}^*\mathbf{H}_{\rm cas}^{\rm H}| = |\mathbf{I}_{\rm K} + \Tilde{\mathbf{S}}_{0}^2\mathbf{S}_1^2\mathbf{S}_{2}^2| = \prod_{k=1}^{\min(\Tilde{R}_0,R_1,R_2)} (1+\Tilde{\lambda}_{0k} \lambda_{1k} \lambda_{2k})$ where $\Tilde{R}_0$, and $R_l$ denote the ranks, and $\Tilde{\lambda}_{0k}$, and $\lambda_{lk}$'s, are the $k$-th eigenvalues of $\Tilde{\mathbf{H}}_{0}$, $\mathbf{H}_l$'s respectively. Using this, the optimal channel capacity can be written as
\begin{align}
    \mathrm{C}^* =\sum_{k=1}^{\min(R_0,R_1,R_2)}\log_2(1+\Tilde{\lambda}_{0k} \lambda_{1k} \lambda_{2k}),\label{eq:Cap_L=2}
\end{align}
The same approach generalizes to any $L$. From \eqref{eq:Phi1_cas_opt} and \eqref{eq:Phi2_cas_opt}, it is evident that each RIS phase matrix aligns the right singular vectors of the incoming channel with the left singular vectors of the outgoing channel. Accordingly, if $\mathbf{H}_{i-1}$ denotes the channel to the $i$-th RIS and $\mathbf{H}_{i}$ the channel from it, the optimal RIS configuration is obtained from their SVDs as
\begin{align}
    ~~&  \boldsymbol{\Phi}_i^* = \mathbf{V}_{i}\mathbf{U}_{i-1}^{\rm H}, \forall i = 2,\ldots, L\label{eq:Phii_cas_svd}\\
     \textit{\emph{and}} ~~& \boldsymbol{\Phi}_1^* = \mathbf{V}_{1}\Tilde{\mathbf{U}}_{0}^{\rm H}, \label{eq:Phi1_cas_svd}
\end{align}
where, $\mathbf{U}_{i-1}$ and $\mathbf{V}_{i}$ are the left and right eigenvectors of the links $\mathbf{H}_{i-1}$ and $\mathbf{H}_{i}$ respectively, and $\Tilde{\mathbf{U}_0}$ is the left singular eigenvector of  $\Tilde{\mathbf{H}}_{0}$. Thus extending \eqref{eq:Cap_L=2}, the optimal capacity for general $L$ can be expressed in closed form as 
{\small \begin{equation}
    \mathrm{C}^* = \sum_{k=1}^{R} \log_2\left(1+ \tilde{\lambda}_{0k}\prod_{l=1}^{L}\lambda_{lk}\right),\label{eq:Cap_svd_L}
\end{equation}}
where $R = \min(\Tilde{R}_0,R_1,\ldots,R_L)$, $\tilde{\lambda}_{0k}$ and $\lambda_{ik}$'s denote the $k$-th eigenvalue of the channels $\Tilde{\mathbf{H}}_0$ and $\mathbf{H}_i$ respectively.

It is noteworthy that the optimal RIS phase shifts in \eqref{eq:Phii_cas_svd}–\eqref{eq:Phi1_cas_svd} and the precoding matrix in \eqref{eq:svd-wf} are obtained in closed form. Although the optimal precoder and RIS matrices are interdependent, \eqref{eq:Phii_cas_svd} shows that $\boldsymbol{\Phi}_i$ for $i \geq 2$ depends only on $\mathbf{H}_i$, allowing a one-shot solution for $\{\boldsymbol{\Phi}_i\}_{i=2}^L$. Further the optimal solution for $\mathbf{\Phi}_1$ and $\mathbf{Q}$ can be obtained by solving \eqref{eq:Phi1_cas_svd} and \eqref{eq:svd-wf} iteratively. Algorithm \ref{Alg: cas+svd} outlines the step-by-step procedure to determine the optimal solution of \textbf{(P1)}.

\begin{algorithm}[!h]
\KwInput{$P_t$, $\epsilon$, $\mathbf{H}_l$   for $\forall l = 0,1,\ldots,L$} 
\KwOutput{$\mathbf{Q}^*$ and $\mathbf{\Phi}_l^* ~ \forall l=1,\ldots,L$}
Evaluate SVDs $\mathbf{H}_l = \mathbf{U}_l\mathbf{S}_l\mathbf{V}_l^{\rm H},~~\forall l$ and set $\boldsymbol{\Phi}_l^* = \mathbf{V}_{l}\mathbf{U}_{l-1}^{\rm H},~~\forall l = 2,\ldots,L.$\\
\KwInit{$\mathbf{Q}_0=\mathbf{I}_{M}$, $\boldsymbol{\Phi}_{1,0} = \mathbf{I}_{N_1}$ and $k=1$.}
\SetKwRepeat{Repeat}{Repeat}{Until:}
\Repeat{$|\mathrm{C}(\mathbf{\Phi}_{1,k},\mathbf{Q}_k,\mathbf{\Phi}_{l}^*)-\mathrm{C}(\mathbf{\Phi}_{1,k-1},\mathbf{Q}_{k-1},\mathbf{\Phi}_{i}^*)| \leq \epsilon$}
{
Calculate $\Tilde{\mathbf{H}}_{0} = \mathbf{H}_0\mathbf{Q}_{k-1}^{\frac{1}{2}}$\\
Evaluate the SVD $\Tilde{\mathbf{H}}_{0} = \Tilde{\mathbf{U}}_{0}\Tilde{\mathbf{S}}_{0}\Tilde{\mathbf{V}}_{0}^{\rm H}$ and set $\boldsymbol{\Phi}_{1,k} = \mathbf{V}_{1}\Tilde{\mathbf{U}}_{0}^{\rm H}$\\
Compute $\mathbf{H}_{{\rm cas},k} = \mathbf{H}_{\rm L}\mathbf{\Phi}_{\rm L}^*\mathbf{H}_{\rm L-1} \mathbf{\Phi}_{\rm L-1}^* \dots \mathbf{H}_{2} \mathbf{\Phi}_2^* \mathbf{H}_1 \mathbf{\Phi}_{1,k} \mathbf{H}_{0}$\\
Evaluate the SVD $\mathbf{H}_{{\rm cas},k} = \mathbf{U}_{\rm cas}\mathbf{S}_{\rm cas}\mathbf{V}_{\rm cas}^{\rm H}$.\\
Update $\mathbf{Q}_{k}$ according to \eqref{eq:svd-wf}.\\
$k \longleftarrow k+1$
}
Set $\mathbf{\Phi}_1^*=\mathbf{\Phi}_1$, $\mathbf{\Phi}_l^*=\mathbf{\Phi}_l ~\forall l=2,\ldots,L$ and $\mathbf{Q}^*=\mathbf{Q}_k$
\caption{Cascaded RIS: Capacity maximization with SVD-WF based precoding}
\label{Alg: cas+svd}
\end{algorithm}
For SVD–WF precoding, the optimal solution requires iterative updates (Algorithm \ref{Alg: cas+svd}). Replacing the precoder with UPA reduces the computational complexity by allocating power uniformly across all $M$ antennas irrespective of channel states. This decouples $\mathbf{Q}$ and $\mathbf{\Phi}_0$, enabling a one-shot closed-form solution for both, as presented in the following theorem.

\begin{theorem}\label{Theo: UPA_cas}
    For UPA precoding, i.e. $\mathbf{Q} = \frac{P_t}{M}\mathbf{I}_M$, the optimal phase shift matrices for cascaded RISs are given by
    \begin{align}
        \boldsymbol{\Phi}_i^*&= \mathbf{V}_{i}\mathbf{U}_{i-1}^{\rm H}, \forall i = 1,\ldots, L\label{eq:Phi_upa_cas}
    \end{align}
    and the achievable capacity is given by
    {\small \begin{align}
        \mathrm{C}^*& = \sum_{k=1}^{R} \log_2\left(1+ \tfrac{P_t}{M}\prod_{l=0}^L\lambda_{lk}\right),\label{eq:Capacity_upa_cas}
    \end{align}}
    where $R=\min(R_0,R_1,\ldots,R_L)$, $R_l$ and $\lambda_{lk}$ are the rank and the $k$-th  eigenvalue of $\mathbf{H}_l$.
\end{theorem}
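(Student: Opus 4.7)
The plan is to specialize the analysis preceding Theorem~\ref{Theo: UPA_cas} to the case $\mathbf{Q}=(P_t/M)\mathbf{I}_M$, for which the precoder is no longer tied to the channel state. My first step would be to observe that this isotropic precoder decouples the transmitter from the first RIS and thereby removes the bilevel dependence that forced the iterative Algorithm~\ref{Alg: cas+svd} in the SVD--WF case. Since \eqref{eq:Phii_cas_svd} already fixes $\boldsymbol{\Phi}_i^*$ for $i\ge 2$ through quantities that depend only on $\mathbf{H}_{i-1}$ and $\mathbf{H}_i$, the only quantity I would need to re-examine is $\boldsymbol{\Phi}_1$.

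The key technical step is to compute the effective first link, $\Tilde{\mathbf{H}}_{0} = \mathbf{H}_{0}\mathbf{Q}^{1/2} = \sqrt{P_t/M}\,\mathbf{H}_{0}$. Because this is a positive scalar multiple of $\mathbf{H}_0$, its SVD inherits the same left and right singular vectors, so $\Tilde{\mathbf{U}}_{0}=\mathbf{U}_{0}$ and $\Tilde{\mathbf{V}}_{0}=\mathbf{V}_{0}$, with the singular values rescaled by $\sqrt{P_t/M}$. Substituting $\Tilde{\mathbf{U}}_{0}=\mathbf{U}_{0}$ into the previously derived formula $\boldsymbol{\Phi}_1^* = \mathbf{V}_{1}\Tilde{\mathbf{U}}_{0}^{\rm H}$ of \eqref{eq:Phi1_cas_svd} immediately yields $\boldsymbol{\Phi}_1^* = \mathbf{V}_{1}\mathbf{U}_{0}^{\rm H}$, which matches the general pattern $\boldsymbol{\Phi}_i^*=\mathbf{V}_{i}\mathbf{U}_{i-1}^{\rm H}$ for all $i=1,\ldots,L$. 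Unitarity is automatic, since a product of two unitary matrices is unitary, so the constraint \eqref{eq:Phi_constraint_cas} is satisfied.

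For the capacity expression, I would plug these optimal phase matrices into the closed form \eqref{eq:Cap_svd_L}. With the squared-singular-value identification $\Tilde{\lambda}_{0k}=(P_t/M)\,\lambda_{0k}$, the scalar $P_t/M$ pulls out of the inner product, giving $\sum_{k=1}^{R}\log_2\bigl(1+(P_t/M)\prod_{l=0}^{L}\lambda_{lk}\bigr)$, which is exactly \eqref{eq:Capacity_upa_cas}. The rank $R=\min(R_0,R_1,\ldots,R_L)$ is unaffected by the positive rescaling, so the summation index range is preserved.

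I do not expect a significant obstacle here, since the heavy lifting (the Von Neumann trace bound and the attainability of that bound under the unitary constraint) has already been performed for an arbitrary $\mathbf{Q}$; the only substantive point is that fixing $\mathbf{Q}$ to be isotropic eliminates the coupling with $\boldsymbol{\Phi}_1$. The conceptual message I would want the proof to emphasize is that UPA collapses the bilevel optimization over $(\mathbf{Q},\boldsymbol{\Phi}_1)$ into a single one-shot closed form while preserving the eigen-mode alignment structure of the SVD--WF optimum, which is precisely what motivates the complexity advantage claimed for UPA in the paper.
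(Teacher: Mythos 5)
Your proposal is correct and follows essentially the same route as the paper's own proof: both exploit that $\Tilde{\mathbf{H}}_{0}=\sqrt{P_t/M}\,\mathbf{H}_{0}$ is a positive scalar multiple of $\mathbf{H}_0$, so the singular vectors are unchanged (giving $\boldsymbol{\Phi}_1^*=\mathbf{V}_1\mathbf{U}_0^{\rm H}$ while $\boldsymbol{\Phi}_i^*$ for $i\ge 2$ carry over unchanged), and the eigenvalues scale as $\Tilde{\lambda}_{0k}=(P_t/M)\lambda_{0k}$, which upon substitution into \eqref{eq:Cap_svd_L} yields \eqref{eq:Capacity_upa_cas}. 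No gaps to report.
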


\begin{proof}
For UPA precoding, where $\mathbf{Q} = \tfrac{P_t}{M}\mathbf{I}_M$, we have $\Tilde{\mathbf{H}}_{0} = \mathbf{H}_0 \mathbf{Q}^{1/2} = \sqrt{\tfrac{P_t}{M}}\mathbf{H}_0$. Hence, $\Tilde{\mathbf{H}}_{0}$ differs from $\mathbf{H}_0$ only by a scalar 
factor which leaves the singular vectors of 
$\mathbf{H}_0$ unchanged, while only scaling its singular values. Since the optimal RIS phase design depends only on the singular vectors, the optimal $\mathbf{\Phi}_1$ given in \eqref{eq:Phi1_cas_svd} becomes 
\begin{align}
    \mathbf{\Phi}_1^* = \mathbf{V}_1 \mathbf{U}_0^{\rm H},
\end{align}
The remaining RIS phase matrices $\mathbf{\Phi}_i, \, \forall i = 2,\ldots,L$, remain independent of the precoding matrix and thus follow directly from \eqref{eq:Phii_cas_svd}. This establishes the general optimal RIS configuration given in \eqref{eq:Phi_upa_cas} for UPA precoding. 
Furthermore, for UPA precoding, the eigenvalues of $\Tilde{\mathbf{H}}_0$ are scaled versions of 
those of $\mathbf{H}_0$, i.e., $ \Tilde{\lambda}_{0k} = \tfrac{P_t}{M} \lambda_{0k}$, where $\lambda_{0k}$ denotes the $k$-th eigenvalue of $\mathbf{H}_0$. Substituting this into  \eqref{eq:Cap_svd_L},  we obtain the capacity for UPA precoding as given in \eqref{eq:Capacity_upa_cas}. 
\end{proof}

\begin{remark}\label{rem:UPAvsSVD}
With optimally configured cascaded RISs, we numerically observe that UPA achieves performance comparable to SVD-WF based precoding: consider the effective channel matrix from the BS to users via RIS as $\mathbf{H}_{\mathrm{eff}}$ whose $(m,k)$-th element can be written as  
\begin{equation}  
    [\mathbf{H}_{\mathrm{eff}}]_{m,k} = \sum_{n=1}^{N} (\mathbf{H}_2)_{m,n} e^{j\phi_n} (\mathbf{H}_1)_{n,k},  
\end{equation}  
where $\phi_n$ denotes the phase shift provided by the $n$-th RIS element. As $N \to \infty$, the channel hardening effect occurs which ensure that the random fluctuations get averaged out, yielding  
\begin{equation}  
    \lim_{N\to\infty}\frac{1}{N}\mathbf{H}_{\mathrm{cas}}\mathbf{H}_{\mathrm{cas}}^{H} \xrightarrow{} \mathbb{E}\!\left[\frac{1}{N}\mathbf{H}_{\mathrm{cas}}\mathbf{H}_{\mathrm{cas}}^{H}\right]. 
\end{equation}  
Consequently, the variance of eigenvalues of $\mathbf{H}_{\mathrm{eff}}$ diminishes with $N$, implying that all non-zero eigenvalues become nearly equal, i.e. $\lambda_1 \approx \lambda_2 \approx \cdots \approx \lambda_R$. 
%Channel hardening — the matrix becomes nearly deterministic. If the matrix becomes deterministic, its eigenvalues also stop fluctuating. Thus, Variance of eigenvalues diminishes.
This equalization implies that the effective channel behaves like a scaled identity matrix. Thus, all eigenmodes are allocated equal power as $ p_1 = p_2 = \cdots = p_R = \frac{P_t}{M}$. Therefore, the waterfilling solution in SVD-WF based algorithm reduces to UPA with covariance matrix
\begin{equation}
    \mathbf{Q} = \mathbf{V} \operatorname{diag}(p_1^*,\ldots,p_R^*) \mathbf{V}^{\rm H}= \tfrac{P_t}{M}\mathbf{I}_{\rm M},
\end{equation}
since $\mathbf{V}\mathbf{V}^{\rm H} = \mathbf{I}$.  
Hence, with a sufficiently large and optimized RIS, UPA becomes asymptotically equivalent to SVD-based precoding—a practically relevant observation since RISs often employ large arrays. This is analogous to large-scale MIMO systems, where channel hardening and eigenvalue equalization render UPA nearly optimal as array size increases.
\end{remark}  

Note that regardless of the precoder $\mathbf{Q}$, the RIS functions as a phase-matching layer that aligns dominant channel eigenmodes, maximizing the SNR gain in the cascaded channel.

\subsection{Special Case 1: Single RIS System}
The results derived for the cascaded RIS arrangement can be directly reduced to the single-RIS case i.e. $L=1$. For this case, we present solutions for both type of precoding i.e. SVD-WF and UPA in the following table.
{\small \begin{table}[!h]
\centering
\caption{Optimal RIS Phase Shift and Capacity for $L=1$.}
\label{Table:SVD-UPA_L=1}
\resizebox{\columnwidth}{!}{
\begin{tabular}{|c|c|c|}
\hline
\textbf{Precoding} & \textbf{Optimal RIS Phase Shift} & \textbf{ Optimal Capacity} \\
\hline
SVD-WF & $ \boldsymbol{\Phi}_1^*\! =\! \mathbf{V}_1\mathbf{\Tilde{U}}_0^{\rm H}$ 
& $\sum_{k=1}^{\min(\Tilde{R}_0,R_1)}\log_2\left(1+\Tilde{\lambda}_k\sigma_k\right)$ \\
\hline
UPA & $\boldsymbol{\Phi}_1^* = \mathbf{V}_1\mathbf{U}_0^{\rm H}$ 
& $\sum_{k=1}^{\min(R_0,R_1)}\log_2\left(1+ \frac{P_t}{M}\lambda_k\sigma_k\right)$ \\
\hline
\end{tabular}
}
\end{table}}

In SVD-WF precoding, $\boldsymbol{\Phi}_1$ and $\mathbf{Q}$ needs to be iteratively obtained similar to Algorithm \ref{Alg: cas+svd} for their jointly optimal values. 

\subsection{Special Case 2: Diagonal Passive RIS}
For passive D-RIS where its elements are not connected, the phase shift matrix has unit modulus constraint. Here, directly obtaining the optimal phase shift matrix is computationally complex, e.g., refer to the algorithm presented in \cite{zhang2021capacity}. Interestingly, our proposed solution in Algorithm \ref{Alg: cas+svd} can yield a near-optimal phase shift matrix $\mathbf{\Tilde{\Phi}}$ for D-RIS. Specifically, we obtain $\Tilde{\mathbf{\Phi}}$ by projecting the optimal unitary solution $\mathbf{\Phi}^*$ onto the feasible set of diagonal unit-modulus matrices. This is achieved by minimizing the Frobenius distance
\begin{equation}
\min_{\mathbf{\tilde{\Phi}} = \mathrm{diag}(e^{i\theta_j})} \| \mathbf{\Phi^*} - \mathbf{\tilde{\Phi}} \|_F^2.\label{eq:Frob_norm}
\end{equation}

Since $\mathbf{\tilde{\Phi}}$ is diagonal, we can write the argument in \eqref{eq:Frob_norm} as 
\begin{equation}
\| \mathbf{\Phi^*} - \mathbf{\tilde{\Phi}} \|_F^2 = \sum_{j=1}^n |\mathbf{{\Phi}}^*_{jj} - e^{i\theta_j}|^2 + \sum_{j \neq k} |\mathbf{{\Phi}}^*_{jk}|^2,    
\end{equation}
which is minimized when $e^{i\theta_j}$ (a complex number on the unit circle) is closest to  $\mathbf{{\Phi}}^*_{jj}$. Thus, the near-optimal unit modulus diagonal phase shift matrix can be determined as 
\begin{equation}
    \mathbf{\tilde{\Phi}^*} = \mathrm{diag}\left( \frac{\mathbf{{\Phi}}^*_{11}}{|\mathbf{{\Phi}}^*_{11}|}, \frac{\mathbf{{\Phi}}^*_{22}}{|\mathbf{{\Phi}}^*_{22}|}, \dots, \frac{\mathbf{{\Phi}}^*_{NN}}{|\mathbf{{\Phi}}^*_{NN}|} \right).\label{eq:diagRIS}
\end{equation}

% %------------------------------------------------------------
% %Ergodic Capacity Analysis
\section{Ergodic Capacity for cascaded RIS}
Here, we analyze the EC of $L$ cascaded RISs and derive closed-form approximations for (i) high-SNR regime and, (ii) the case of moderately large $N$ compared to $M$ and $K$.

\begin{theorem}\label{theo: EC_cas}
For optimally configured cascaded RIS-based MIMO system with UPA precoding under Rayleigh channel, the ergodic capacity $\mathbb{E}[C_{\rm cas}^*]$ can be approximated as 
% \begin{equation}
% \mathrm{EC}_{\rm cas} = \!\sum_{k=1}^{R}\!\log_2\!\!\left(1+\tfrac{P_t}{M}\prod_{l=0}^{L}\!\mathbb{E}[\lambda_{l_k}]\!\right)
% -\!\frac{(\tfrac{P_t}{M})^2\!\left(\!\prod\limits_{l=1}^{L}\!\mathbb{E}[\lambda_l^2]-\!\prod_{l=1}^{L}\!\mathbb{E}[\lambda_l]^2\!\right)}{2\log_2\!\left(1+\!\prod_{l=0}^{L}\!\mathbb{E}[\lambda_{l_k}]\right)^{\!2}},
% \label{eq:EC_cas_theo2}
% \end{equation}
\begin{equation}
\mathrm{EC}_{\rm cas} = \!\sum_{k=1}^{R}\!\log_2\!\!\left(1+\tfrac{P_t}{M}\mathcal{M}_1\!\right)
-\!\frac{(\frac{P_t}{M})^2\!\left(\!\mathcal{M}_2-\!\mathcal{M}_1^2\!\right)}{2\log_2\!\left(1+\!\mathcal{M}_1\right)^{\!2}},
\label{eq:EC_cas_theo2}
\end{equation}
where $\mathcal{M}_n=\prod_{l=0}^{L}\!\mathbb{E}[\lambda_{l_k}^n]$, $R= \min(R_0,R_1,\ldots,R_L)$ such that $R_l$ is the rank of $\mathbf{H}_l$, and $\lambda_{lk}$ is the $k$-th eigenvalue of $\mathbf{H}_l$. The $n$-th moment of the eigenvalue of a Wishart matrix is obtained using \cite{nguyen2023depth} as
    \begin{align}
        \mathbb{E}[\lambda^n] = \tfrac{\Gamma(n + \alpha + 1)}{r \Gamma(\alpha + 1)} \sum_{k=0}^{r-1} \tfrac{1}{k!} \tfrac{d^k}{dh^k} \left[ \tfrac{{}_2F_1(a,b;c;z)}{(1 - h)^{-n}(1 + h)^{1 + \alpha + n}} \right],\label{eq: Moment}\nonumber
    \end{align}
such that $\alpha$ is the difference between the matrix dimensions and \( {}_2F_1(a,b;c;z) \) is the Gaussian hypergeometric function with parameters 
$$      
a = \tfrac{1 + \alpha + 2}{2},~ 
b = 1 + \tfrac{\alpha + n}{2}, ~
c = 1 + \alpha,~\text{and} ~
z = \tfrac{4h}{(1 + h)^2}. 
$$
\end{theorem}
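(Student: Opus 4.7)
The starting point is the deterministic capacity expression in Theorem~\ref{Theo: UPA_cas}, namely $C^{*}=\sum_{k=1}^{R}\log_2\!\bigl(1+\tfrac{P_t}{M}\prod_{l=0}^{L}\lambda_{lk}\bigr)$, which is already a closed-form function of the random eigenvalues. Taking its expectation over the Rayleigh channels reduces to evaluating $\mathbb{E}\bigl[\log_2\!\bigl(1+\tfrac{P_t}{M}X_k\bigr)\bigr]$ with $X_k \triangleq \prod_{l=0}^{L}\lambda_{lk}$ and summing over $k$. The plan is to apply a second-order delta-method expansion of $f(x)=\log_2(1+\tfrac{P_t}{M}x)$ about $x_0 = \mathbb{E}[X_k]$, then evaluate the resulting moments of $X_k$ by exploiting independence across the cascaded links.

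First, the delta method gives $\mathbb{E}[f(X_k)]\approx f(\mathbb{E}[X_k])+\tfrac{1}{2}f''(\mathbb{E}[X_k])\,\mathrm{Var}(X_k)$, and since $f''(x)=-\tfrac{(P_t/M)^{2}}{\ln 2\,(1+(P_t/M)x)^{2}}$, this produces exactly the two-term structure of \eqref{eq:EC_cas_theo2}. Second, because the channel matrices $\mathbf{H}_0,\mathbf{H}_1,\ldots,\mathbf{H}_L$ are drawn independently, the eigenvalues $\{\lambda_{lk}\}_{l=0}^{L}$ at a fixed mode index $k$ are mutually independent across $l$, so $\mathbb{E}[X_k^{n}]=\prod_{l=0}^{L}\mathbb{E}[\lambda_{lk}^{n}]$. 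Identifying this product with $\mathcal{M}_n$ yields $\mathbb{E}[X_k]=\mathcal{M}_1$ and $\mathrm{Var}(X_k)=\mathcal{M}_2-\mathcal{M}_1^{2}$; substituting these into the delta-method formula and summing over $k$ recovers \eqref{eq:EC_cas_theo2}.

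The remaining ingredient is the per-link moment $\mathbb{E}[\lambda_{lk}^{n}]$, which I would import directly from the result of \cite{nguyen2023depth} for an unordered eigenvalue of a complex (rectangular) Wishart matrix, with the shape parameter $\alpha$ equal to the dimension difference of $\mathbf{H}_l$ and $r$ equal to its smaller dimension. This step is essentially a citation: once the marginal eigenvalue density is available, the ${}_2F_1$-based expression in the theorem statement follows by differentiating $n$ times in $h$ and evaluating at $h=0$.

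The main technical obstacle is not the algebra but two modeling issues that control the quality of the approximation. First, $\lambda_{lk}$ is the \emph{$k$-th ordered} eigenvalue, whose marginal law differs in general from the unordered marginal used in the Wishart moment formula; the approximation implicitly treats them as interchangeable, which is sharp in the bulk of the spectrum (e.g.\ moderately large $N$, consistent with the channel-hardening argument in Remark~\ref{rem:UPAvsSVD}) but degrades at the spectral edges. Second, truncating the Taylor series at second order requires $\mathrm{Var}(X_k)$ to be small relative to $(1+(P_t/M)\mathcal{M}_1)^{2}$; again, cascaded-RIS channel hardening justifies this in the regime of interest, but verifying the accuracy outside that regime would require retaining higher-order cumulants of $X_k$ or direct numerical validation.
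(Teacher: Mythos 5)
Your proposal is correct and follows essentially the same route as the paper: a second-order Taylor (delta-method) expansion of $\log_2(1+\tfrac{P_t}{M}x)$ about the mean, factorization of the moments of $\prod_{l}\lambda_{lk}$ via independence of the $\mathbf{H}_l$'s, and a citation to the Wishart eigenvalue-moment formula. Your additional caveats about the ordered-versus-unordered eigenvalue marginal and the validity of truncating at second order are sensible refinements that the paper's proof leaves implicit.
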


\begin{proof}
EC for optimally configured cascaded RIS can be obtained using \eqref{eq:Capacity_upa_cas} as   
\begin{equation}
     \mathbb{E}[C_{\rm cas}^*] = \sum_{k=1}^{R} \mathbb{E}\left[\log_2\left(1+ \tfrac{P_t}{M}\prod_{l=0}^L\lambda_{lk}\right)\right].
\end{equation}
Using the second-order Taylor series approximation of $\log_2(1 + x) $, we approximate the above expectation as $\mathrm{EC}_{\rm cas} =$
{\small \begin{align*}
     \sum_{k=1}^{R} \log_2\left(1 + \mathbb{E}\left[\tfrac{P_t}{M}\prod_{l=0}^L\lambda_{lk}\right]\right) 
 - \frac{0.5\mathbb{V}\text{ar}(\tfrac{P_t}{M}\prod\limits_{l=0}^L\lambda_{lk})}{ \log 2 \left(1 + \mathbb{E}\left[\tfrac{P_t}{M}\prod\limits_{l=0}^L\lambda_{lk}\right]\right)^2}.%\label{EC_last_step}
\end{align*}}
Since $\mathbf{H}_l$'s are independent, the corresponding $\lambda_{l_k}$'s are also independent and thus $\mathrm{EC}_{\rm cas}$ can be rewritten as \eqref{eq:EC_cas_theo2}.
\end{proof}

Although the expression in Theorem \ref{theo: EC_cas} is in closed form, its evaluation is computationally demanding due to higher-order moments and hypergeometric terms, and it does not offer insights into how capacity scales with $N$, $M$, and $K$. To address this, the following theorem presents a tractable high-SNR approximation.

\begin{theorem}\label{theo: EC_cas_highSNR}
At high SNR, EC with optimally configured cascaded RISs and UPA precoding is obtained as $\mathrm{EC}_{\rm HS} =$
\begin{equation}
 \!\min(K,M)\log_2\!\!\left(\tfrac{P_t}{M}\right)
+\!\sum_{l=0}^{L}\!R_l + \log_2|\mathbf{\Sigma}_l|
+\!\tfrac{1}{\ln 2}\!\sum_{i=1}^{R_l}\!\psi\!\!\left(\frac{\nu_l\!-\!i\!+\!1}{2}\right)
\label{eq:Capacity_highSNR}
\end{equation}
where $\mathbf{\psi(.)}$ is a digamma function, and $\mathbf{H}_l\mathbf{H}_l^{\rm H} \sim \mathcal{W}\left(\nu_l,\mathbf{\Sigma}_l\right)$.
\label{Theo: HIgh SNR}
\end{theorem}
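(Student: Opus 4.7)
The proof starts from the closed-form optimal capacity under UPA precoding established in Theorem~\ref{Theo: UPA_cas}, namely $\mathrm{C}^*=\sum_{k=1}^{R}\log_2(1+\tfrac{P_t}{M}\prod_{l=0}^{L}\lambda_{lk})$ with $R=\min(R_0,\ldots,R_L)$. The first move is the high-SNR linearization $\log_2(1+x)\approx\log_2(x)$ for $x\gg 1$, which, applied termwise inside the sum, separates the expression into a ``prelog'' contribution $R\log_2(P_t/M)$ plus an additive sum of log-eigenvalues $\sum_{l=0}^{L}\sum_{k=1}^{R}\log_2\lambda_{lk}$. Under the Rayleigh-fading assumption and typical sizing where each RIS is large compared to $M$ and $K$, the bottleneck rank $R$ collapses to $\min(K,M)$, producing exactly the prelog term that appears in \eqref{eq:Capacity_highSNR}.

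Next I would replace the inner log-eigenvalue sum by a log-determinant: $\sum_{k=1}^{R_l}\log_2\lambda_{lk}=\log_2|\mathbf{H}_l\mathbf{H}_l^{\rm H}|$, so that taking expectations reduces to computing $\mathbb{E}[\log_2|\mathbf{H}_l\mathbf{H}_l^{\rm H}|]$ for each link. Since $\mathbf{H}_l\mathbf{H}_l^{\rm H}\sim\mathcal{W}(\nu_l,\mathbf{\Sigma}_l)$, the classical Wishart log-determinant identity gives
\begin{equation*}
\mathbb{E}[\log_2|\mathbf{H}_l\mathbf{H}_l^{\rm H}|]=R_l+\log_2|\mathbf{\Sigma}_l|+\tfrac{1}{\ln 2}\sum_{i=1}^{R_l}\psi\!\left(\tfrac{\nu_l-i+1}{2}\right),
\end{equation*}
where the additive $R_l$ arises from converting the $p\ln 2$ constant that appears in the standard expansion into a base-2 logarithm. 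By independence of the $\{\mathbf{H}_l\}$, the expectation factorizes across $l$; summing the per-link formula over $l=0,\ldots,L$ and combining with the prelog yields exactly \eqref{eq:Capacity_highSNR}.

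The main obstacle is reconciling the two rank scales in the expression: the outer index $k$ in $\mathrm{C}^*$ runs only up to $R=\min(R_0,\ldots,R_L)$, whereas the expected log-determinant of link $l$ naturally involves all $R_l$ eigenvalues. The clean way around this is to argue that in the large-RIS regime the only binding ranks are $M$ and $K$, so truncating the outer sum at $\min(K,M)$ is tight while each per-link Wishart determinant still sums over its own $R_l$ indices and captures the full log-volume of $\mathbf{H}_l\mathbf{H}_l^{\rm H}$. A secondary subtlety is the per-$k$ pairing $\prod_l\lambda_{lk}$ across distinct links, which would ordinarily be awkward; however, because the logarithm turns the product into a sum and the $\{\mathbf{H}_l\}$ are independent, the expectation separates cleanly without any need to track a joint ordering of eigenvalues across different $l$, and the stated closed form follows.
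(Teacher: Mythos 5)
Your proposal follows essentially the same route as the paper's proof: the paper drops the identity at high SNR via a pseudo-determinant, factorizes it into the $\min(K,M)\log_2(P_t/M)$ prelog plus per-link terms $\log_2|\mathbf{S}_l^2|_+$, and applies the Wishart log-determinant expectation---which is exactly your linearize--separate--average argument written in matrix rather than eigenvalue notation. The rank mismatch you flag (outer sum truncated at $R$ versus per-link determinants over all $R_l$ nonzero eigenvalues) is equally present in the paper's step $\left|\prod_l \mathbf{S}_l^2\right|_+ = \prod_l\left|\mathbf{S}_l^2\right|_+$ and is handled there no more rigorously than in your proposal.
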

\begin{proof}
    At high SNR, capacity in \eqref{eq:Capacity_upa_cas} can be approximated as
    {\small \begin{align}
        \log_2 \left|\mathbf{I}_K + \frac{P_t}{M}\prod_{l=0}^L\mathbf{S}_l^2\right| & \stackrel{(a)}{=}  \log_2\left|\frac{P_t}{M}\prod_{l=0}^L\mathbf{S}_l^2\right|_+,\nonumber\\
         &= \log_2\left|\frac{P_t}{M}\right|_+\left|\mathbf{S}_0^2\right|_+\left|\mathbf{S}_1^2\right|_+\ldots\left|\mathbf{S}_L^2\right|_+,\nonumber \\
         &= \min(K,M)\log_2\left(\frac{P_t}{M}\right) + \sum_{l=0}^{L} \log_2 \left| \mathbf{S}_l^2\right|_+.\nonumber
    \end{align}}
    In Step (a), the pseudo-determinant (denoted as $|\cdot|_+$) is used to handle rank-deficient cases where $\det(\mathbf{I}+\mathbf{X}) \approx \det(\mathbf{X})$ fails, as $\det(\mathbf{X})$ becomes zero for non–full-rank $\mathbf{X}$.
    Note that, $\det(\mathbf{I}+\mathbf{X})$ remains strictly positive since the addition of the identity shifts the zero eigenvalues of $\mathbf{X}$ to one. By considering pseudo-determinant, defined as the product of only the nonzero eigenvalues, we avoid this degeneracy and obtain a finite and meaningful quantity that closely reflects the behavior of $\det(\mathbf{I}+\mathbf{X})$. EC can now be obtained as
    \begin{align}
     \mathrm{EC}_{\rm HS} = \min(K,M)\!\log_2\!\left(\frac{P_t}{M}\right)\! +\! \sum_{l=0}^{L} \mathbb{E}\left[\log_2 \left| \mathbf{S}_l^2\right|_+\right].\label{eq: EC_high_SNR}
    \end{align}
    Using the known expectation of log-determinant of a Wishart matrix \cite{nguyen2023depth}, EC is obtained as in \eqref{eq:Capacity_highSNR}. 
\end{proof}

\begin{corollary}\label{cor: EC_cas_highSNR}
At high SNR and independent fading, with a moderately large number of RIS elements $N$ compared to $M$ and $K$,  EC expression given in \eqref{eq:Capacity_highSNR} can be simplified as 
\begin{align}
      \tilde{\rm EC}_{\rm HS}\! = & \min(K,M)\log_2\!\left(\tfrac{P_t}{M}\right) + 2\min(K,M)\log_2(N) \nonumber\\
     & + M(L-1)\log_2(N). \label{eq: EC_cf_highSNR}
\end{align}
\end{corollary}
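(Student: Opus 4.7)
The plan is to specialize the high-SNR EC expression from Theorem \ref{Theo: HIgh SNR} to the regime where $N$ is moderately large compared to $M$ and $K$, under independent unit-variance Rayleigh fading. First, I would set $\boldsymbol{\Sigma}_l = \mathbf{I}$ for every link, which immediately eliminates each $\log_2|\boldsymbol{\Sigma}_l|$ term in \eqref{eq:Capacity_highSNR} and leaves only the rank-plus-digamma contribution per link.

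Next, since each Wishart degree-of-freedom $\nu_l$ equals $N$, I would invoke the standard asymptotic $\psi(x)\approx\ln x$ (valid whenever $x$ is large) on every summand $\psi\!\bigl((\nu_l-i+1)/2\bigr)$, approximating it by $\ln(N/2)$ for $i$ much smaller than $N$. The inner digamma sum then reduces to $R_l(\ln N-\ln 2)$, which after dividing by $\ln 2$ becomes $R_l\log_2 N - R_l$. Combining with the standalone $R_l$ term in \eqref{eq:Capacity_highSNR} produces a clean per-link contribution of $R_l\log_2 N$, so the EC approximates to $\min(K,M)\log_2(P_t/M) + \bigl(\sum_{l=0}^{L} R_l\bigr)\log_2 N$.

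The key remaining step is identifying the correct effective rank $R_l$ per link in the cascade. For the two boundary links, the rank is jointly limited by the BS ($M$) and user ($K$) dimensions together with the end-to-end bottleneck, giving $R_0 = R_L = \min(K,M)$. For the $(L-1)$ intermediate inter-RIS links, each $\mathbf{H}_l$ is $N\times N$ with algebraic rank $N$, but only the $M$-dimensional subspace carrying the BS transmit signal contributes to the cascaded capacity; hence the operationally relevant rank is $R_l = M$. Summing, $\sum_l R_l = 2\min(K,M) + M(L-1)$, which recovers the coefficient of $\log_2 N$ stated in \eqref{eq: EC_cf_highSNR}.

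The principal obstacle I anticipate is rigorously justifying this effective-rank assignment for the intermediate links. The digamma sum in Theorem \ref{Theo: HIgh SNR} nominally runs over all $R_l = N$ nonzero eigenvalues of each inter-RIS Wishart matrix, which by itself would yield an $N\log_2 N$ contribution per intermediate link. Reducing it to $M\log_2 N$ requires arguing that only the top $M$ eigenmodes of each inter-RIS channel actually propagate signal through the cascade, which in turn demands a careful look at how the BS transmit subspace is preserved (and aligned) by the optimal $\boldsymbol{\Phi}_l$ across successive RIS stages, rather than a direct evaluation of the individual full-rank Wishart log-determinant.
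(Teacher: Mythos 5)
Your proposal follows essentially the same route as the paper's own proof: set $\boldsymbol{\Sigma}_l=\mathbf{I}$ to kill the $\log_2|\boldsymbol{\Sigma}_l|$ terms, use the logarithmic asymptotic of the digamma function with $\nu_l=N$ to collapse each per-link term to $R_l\log_2 N$, and sum the effective ranks to get $2\min(K,M)+M(L-1)$. The effective-rank subtlety you flag for the intermediate $N\times N$ links is real, but the paper's proof does not resolve it either --- it simply writes $R_l\log_2 N$ and implicitly assigns the ranks the same way you do, so your attempt is, if anything, more candid about the one step that needs further justification.
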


\begin{proof}
Since  $\mathbf{H}_l\mathbf{H}_l^{\rm H} \sim \mathcal{W}\left(\nu_l,\mathbf{I}\right)$ for independent fading, the term $\log_2|\mathbf{\Sigma}_i|$ reduces to zero. The parameter $\nu_l$ represents DoF of Wishart matrix which here is $N$. 
Further, the digamma function behaves logarithmically  for a large argument \cite{Digamma_Log}. 
Thus, for $N \gg M,K$, we have
\begin{align*}
R_l+\sum_{i=1}^{R_l}\frac{\psi(\frac{\nu_i-i+1}{2})}{\ln 2} \approx R_l+R_l\log_2(\tfrac{N}{2})=R_l\log_2(N).    
\end{align*}
Combining these arguments, we can simplify \eqref{eq:Capacity_highSNR} to \eqref{eq: EC_cf_highSNR}.
\end{proof}

All results in Theorems \ref{theo: EC_cas}, \ref{theo: EC_cas_highSNR}, and Corollary \ref{cor: EC_cas_highSNR} directly extend to the single RIS case by setting $L=1$.

\begin{remark}\label{rem:N_opt}
The number of elements $N_{\rm req}$ required per RIS for achieving a target capacity $C_{\rm tar}$ at high SNR can be easily obtained from Corollary \ref{cor: EC_cas_highSNR} as
\begin{equation}
    \log_2(N_{\rm req}) = {\frac{C_{\rm tar} - \min(K,M)\log_2\tfrac{P_t}{M}}{2\min(K,M)+M(L-1)}}.\label{eq:N_opt}
\end{equation}
This relation offers key insights for designing cascaded RIS-aided MIMO systems. The term $\min(\!K,M\!)\log_2\tfrac{P_t}{M}$ represents the achievable capacity of a $K \times M$ MIMO system with UPA precoding  solely through the transmit power $P_t$ at high SNR . Thus, the numerator $C_{\rm tar} \!-\! \min(\!K,M\!)\! \log_2\tfrac{P_t}{M}$ captures the additional capacity that is required beyond this $K\!\times \!M$ MIMO system. 
As the target capacity increases, the required number of RIS elements grows exponentially, reflecting the fact that  more passive beamforming for larger array gain is required to bridge the gap between $C_{\rm tar} $ and $\min(K,M)\log_2\tfrac{P_t}{M}$. This also highlights the power--RIS tradeoff: increasing $P_t$ reduces $N_{\rm req}$ required for the same $C_{\rm tar}$, emphasizing the complementary roles of active and passive resources. The denominator $2\min(K,M) \!+\! M(L\!-\!1)$ captures the effective spatial DoF of the system. Increasing $M$ or $K$ improves the multiplexing capability, reducing the RIS dimension for a given $C_{\rm tar}$. Similarly, having multiple RISs ($L\!>\!1$) in cascade increases the denominator, so fewer elements per RIS are needed to achieve the same rate. 
For instance, in a single RIS-aided MIMO system with spatial DoF $\min(K,M)$, the received SNR with RIS array gain is $N_{\rm req}^2 \tfrac{P_t}{M}$. Accordingly, the achievable capacity with $N_{\rm req}$ elements is
\begin{align*}
     C_{\rm tar}&=\min(K,M)\log_2\left(N_{\rm req}^2\tfrac{P_t}{M}\right),\\&=\min(K,M)\log_2\left(\tfrac{P_t}{M}\right)+2\min(K,M)\log_2(N_{\rm req}),
\end{align*}
which is consistent with \eqref{eq:N_opt}. 
Overall, this expression clearly characterizes the tradeoff between transmit power, RIS size, and achievable capacity, providing practical guidance for RIS deployment in high-SNR MIMO systems.
\end{remark}

%------------------------------------------------------------
%Numerical Analysis
\section{Numerical Results}
\begin{figure*}[!ht]
     \centering
         \centering
         \includegraphics[width=0.325\textwidth]{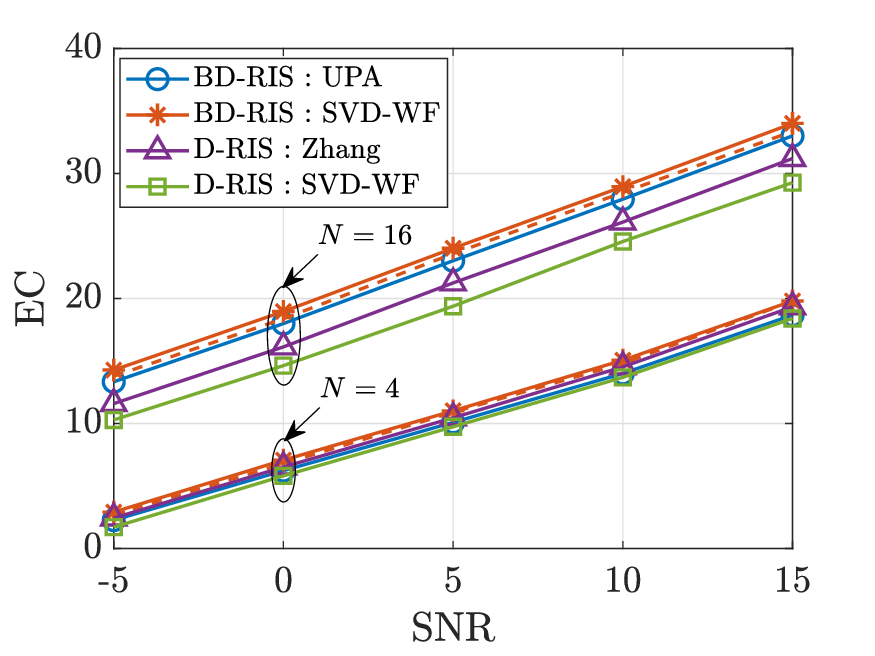}
         \includegraphics[width=0.325\textwidth]{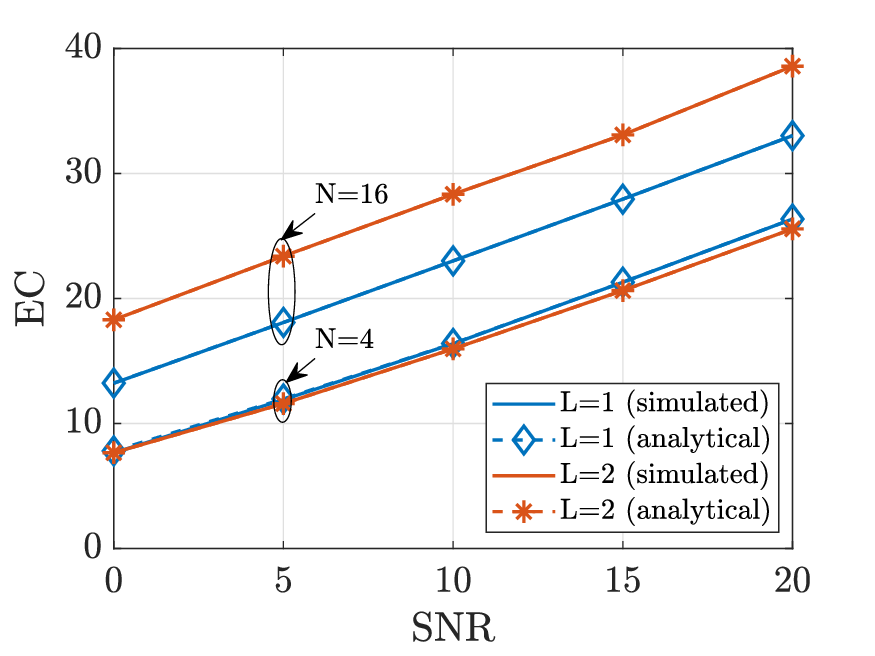}
         \includegraphics[width=0.325\textwidth]{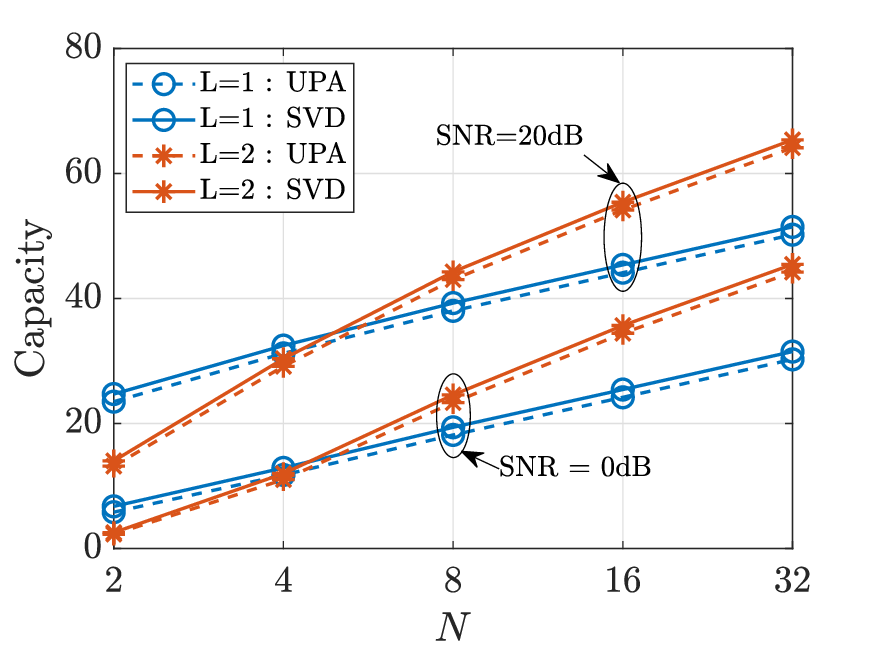}\\\vspace{-2mm}
         \text{(a)\hspace{5.6cm}(b)\hspace{5.6cm}(c)}\vspace{-1.5mm}
        \caption{\small (a) EC vs. SNR for single D-RIS and BD-RIS, dashed lines represent Monte Carlo result for BD-RIS with SVD-WF (b) EC vs. SNR for cascaded BD-RISs, (c) EC vs. $N$ for cascaded BD-RISs.}\vspace{-2mm}
        \label{fig:EC_results}
\end{figure*}

This section presents numerical results validating EC of the proposed cascaded BD-RIS–aided MIMO system and comparing it with equivalent single BD/D-RIS setups.
Unless stated otherwise, we consider $L=2$, $M=K=4$, $P_t=10$, $\sigma_n^2=1$, and independently fading channels with $\mathbf{H}_l \sim \mathcal{CN}(\mathbf{0}, \mathbf{I})$.

Fig. \ref{fig:EC_results}(a) provides numerical validation of the derived EC for a single BD and D-RIS aided MIMO system achievable via proposed solutions. 
The capacities given in Table \ref{Table:SVD-UPA_L=1} for SVD-WF and UPA precoding match closely with the simulated capacities. This verifies the accuracy of proposed solution for configuring RIS and the fact that optimally configured RIS with UPA and SVD-WF yield similar performance as highlighted in Remark \ref{rem:UPAvsSVD}. Besides, the capacity achievable through the near-optimal solution presented in \eqref{eq:diagRIS} is close to Zhang’s optimal algorithm \cite{zhang2021capacity} for D-RIS case. Interestingly, BD-RIS with UPA outperforms D-RIS as $N$ increases, demonstrating its efficiency and lower complexity compared to the computationally intensive D-RIS implementation \cite{zhang2021capacity}. 

Fig.~\ref{fig:EC_results}(b) shows that EC achievable via Algorithm \ref{Alg: cas+svd} matches exactly with the simulated results for both $L=1$ and $L=2$.
For fair comparison, we consider a single RIS with $N$ elements, while each of the $L$ cascaded RISs has $\tfrac{N}{L}$ elements. 
The figure highlights an important fact that EC of single and cascaded RIS systems are almost equal at $N=4$, whereas the cascaded arrangement provides superior  EC compared to the single-RIS case as $N$ increases to $16$. This is attributed to the fact that SNR gain improves quadratically with the number of optimally configured RIS elements $N$ \cite{basar2019wireless}, giving the overall gain of $\left(\tfrac{N}{L}\right)^{2L}$ in $L$ cascaded RISs. Thus, for  relatively small $N$, cascaded deployments
provide gains lesser than $N^2$ and is expected to perform poorly compared to single RIS (as can be seen in Fig. \ref{fig:EC_results}(c)). For parameters considered in Fig. \ref{fig:EC_results}(b), both  arrangements exhibits similar performance. 
However, the SNR gain with cascaded RIS will be significantly larger compared to single RIS when $N \gg L$, verifying the superior performance of the cascaded arrangement. 
Nevertheless, the crossover point between performances of single and cascaded RIS arrangements remain invariant to SNR as the relative SNR gains is determined solely by the inherent power-scaling laws rather than $P_t$. Specifically, the crossover occurs at $N^{2} = \left(\tfrac{N}{L}\right)^{2L}$ or equivalently at $N = L^{\frac{L}{L-1}}$. This can be verified from Fig. \ref{fig:EC_results}(c) wherein the crossover point  occurs approximately at $N=4$ for $L=2$ at $0$ and $20$ dB SNRs.

\begin{figure}[!ht]
         \centering\includegraphics[width=0.35\textwidth]{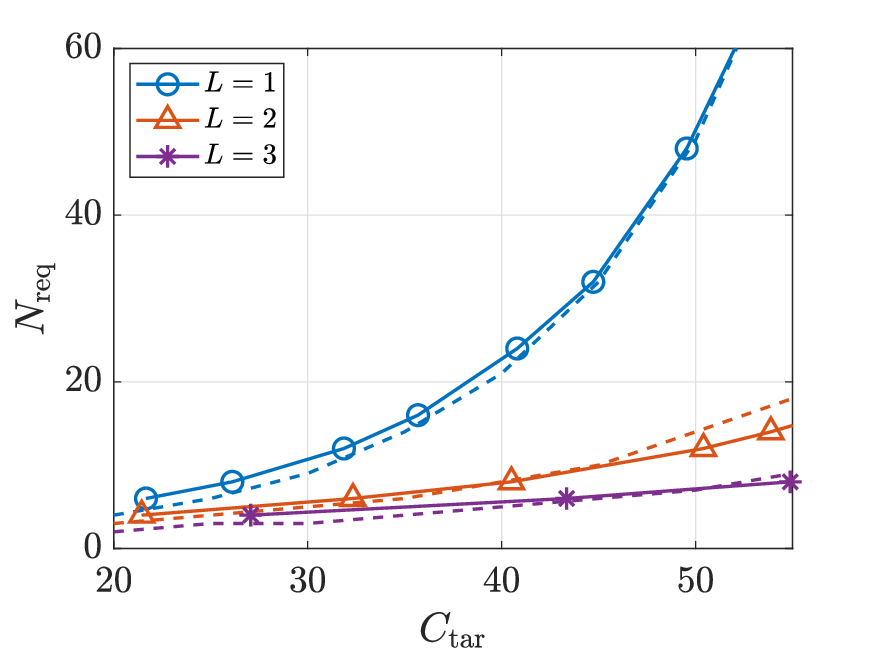}\vspace{-2.5mm}
        \caption{\small Number of RIS elements reequired to achieve a target capacity. The solid and dashed curves represent the analytical and simulation results, respectively. }
        \label{fig:N_opt}
\end{figure}
Fig. \ref{fig:N_opt} verifies that the number of RIS elements $N_{\rm req}$ required to achieve target capacity $C_{\rm tar}$ given in Remark \ref{rem:N_opt} closely match simulations.
It can be observed that the single RIS system requires a significantly large $N_{\rm req}$ compared to cascaded RISs as $C_{\rm tar}$ increases. Thus, it is evident that cascaded RIS deployments provide a clear advantage over the single RIS case.  
Furthermore, elements per RIS, $N_{\rm req}$, is much lesser for $L=3$ than $L=2$, leading to fewer total elements overall as more RISs are cascaded.

%------------------------------------------------------------
%Conclusion

\section{Conclusion}
This paper presents jointly optimal closed-form solutions for cascaded BD-RISs with SVD–WF and UPA precoding schemes, demonstrating that the low-complexity UPA precoder achieves performance comparable to SVD–WF. The solution extends naturally to D-RISs with near-optimal, low-complexity performance. An expression for approximate ergodic capacity under UPA is also derived and simplified for high-SNR and moderately large $N$. The results highlight the superior capacity gains of cascaded-RISs over single RIS, providing practical design insights on the number of elements required per RIS for the cascaded setting. This work can be further extended to evaluate the performance of BD-RIS in alternative deployments, such as parallel arrangement.

%------------------------------------------------------------
%References
\bibliographystyle{IEEEtran}

\begin{thebibliography}{10}
\providecommand{\url}[1]{#1}
\csname url@samestyle\endcsname
\providecommand{\newblock}{\relax}
\providecommand{\bibinfo}[2]{#2}
\providecommand{\BIBentrySTDinterwordspacing}{\spaceskip=0pt\relax}
\providecommand{\BIBentryALTinterwordstretchfactor}{4}
\providecommand{\BIBentryALTinterwordspacing}{\spaceskip=\fontdimen2\font plus
\BIBentryALTinterwordstretchfactor\fontdimen3\font minus \fontdimen4\font\relax}
\providecommand{\BIBforeignlanguage}[2]{{%
\expandafter\ifx\csname l@#1\endcsname\relax
\typeout{** WARNING: IEEEtran.bst: No hyphenation pattern has been}%
\typeout{** loaded for the language `#1'. Using the pattern for}%
\typeout{** the default language instead.}%
\else
\language=\csname l@#1\endcsname
\fi
#2}}
\providecommand{\BIBdecl}{\relax}
\BIBdecl

\bibitem{basar2019wireless}
E.~Basar, M.~Di~Renzo, J.~De~Rosny, M.~Debbah, M.-S. Alouini, and R.~Zhang, ``Wireless communications through reconfigurable intelligent surfaces,'' \emph{IEEE access}, vol.~7, pp. 116\,753--116\,773, 2019.

\bibitem{wu2019towards}
Q.~Wu and R.~Zhang, ``Towards smart and reconfigurable environment: Intelligent reflecting surface aided wireless network,'' \emph{IEEE communications magazine}, vol.~58, no.~1, pp. 106--112, 2019.

\bibitem{Diagonal_RIS}
C.~Pan, G.~Zhou, K.~Zhi, S.~Hong, T.~Wu, Y.~Pan, H.~Ren, M.~D. Renzo, A.~Lee~Swindlehurst, R.~Zhang, and A.~Y. Zhang, ``An overview of signal processing techniques for {RIS/IRS-Aided Wireless Systems},'' \emph{IEEE Journal of Selected Topics in Signal Processing}, vol.~16, no.~5, pp. 883--917, 2022.

\bibitem{Diagonal_RIS_survey}
E.~Bj{\"o}rnson, H.~Wymeersch, B.~Matthiesen, P.~Popovski, L.~Sanguinetti, and E.~De~Carvalho, ``Reconfigurable intelligent surfaces: A signal processing perspective with wireless applications,'' \emph{IEEE Signal Processing Magazine}, vol.~39, no.~2, pp. 135--158, 2022.

\bibitem{zhang2021capacity}
X.~C. C.~L. Z.~Zhang, L.~Dai and Y.~Zhang, ``Capacity characterization for intelligent reflecting surface aided {MIMO} communication,'' \emph{IEEE Transactions on Communications}, vol.~69, no.~6, pp. 3771--3786, 2021.

\bibitem{BD_RIS}
O.~Maraqa, M.~H. Khoshafa, O.~O. Oyerinde, and T.~Ngatched, ``Beyond diagonal {RIS}-aided wireless communications systems: State-of-the-art and future research directions,'' \emph{arXiv preprint arXiv:2503.08826}, 2025.

\bibitem{BD_RIS_Intro}
H.~Li, S.~Shen, M.~Nerini, and B.~Clerckx, ``Reconfigurable intelligent surfaces 2.0: Beyond diagonal phase shift matrices,'' \emph{IEEE Communications Magazine}, vol.~62, no.~3, pp. 102--108, 2023.

\bibitem{BD_RIS_Intro_to_optimization}
H.~Li, S.~Shen, and B.~Clerckx, ``Beyond diagonal reconfigurable intelligent surfaces: From transmitting and reflecting modes to single-, group-, and fully-connected architectures,'' \emph{IEEE Transactions on Wireless Communications}, vol.~22, no.~4, pp. 2311--2324, 2023.

\bibitem{emil2025capacity}
E.~Bj{\"o}rnson and {\"O}.~T. Demir, ``Capacity maximization for {MIMO} channels assisted by beyond-diagonal {RIS},'' in \emph{2025 19th European Conference on Antennas and Propagation (EuCAP)}.\hskip 1em plus 0.5em minus 0.4em\relax IEEE, 2025, pp. 1--5.

\bibitem{parallel_cascaded_RIS}
M.~M. Elsherbini, O.~A. Omer, and M.~Salah, ``Reconfigurable intelligent surface reliable cooperative beamforming based on cascade/parallel hybrid networking,'' \emph{IEEE Access}, vol.~11, pp. 65\,255--65\,265, 2023.

\bibitem{Cascade_RIS}
B.~Zheng, C.~You, and R.~Zhang, ``Double-{IRS Assisted Multi-User MIMO}: Cooperative passive beamforming design,'' \emph{IEEE Transactions on Wireless Communications}, vol.~20, no.~7, pp. 4513--4526, 2021.

\bibitem{RIS_arch1}
E.~Basar, G.~C. Alexandropoulos, Y.~Liu, Q.~Wu, S.~Jin, C.~Yuen, O.~A. Dobre, and R.~Schober, ``Reconfigurable intelligent surfaces for {6G}: Emerging hardware architectures, applications, and open challenges,'' \emph{IEEE Vehicular Technology Magazine}, vol.~19, no.~3, pp. 27--47, 2024.

\bibitem{joint_ActivePassive_BD-RIS}
X.~Zhou, T.~Fang, and Y.~Mao, ``Joint active and passive beamforming optimization for beyond diagonal {RIS}-aided multi-user communications,'' \emph{IEEE Communications Letters}, 2025.

\bibitem{amgm}
M.~D. Hirschhorn, ``The am-gm inequality,'' \emph{Mathematical Intelligencer}, vol.~29, no.~4, pp. 7--7, 2007.

\bibitem{tse2005fundamentals}
D.~Tse and P.~Viswanath, \emph{Fundamentals of wireless communication}.\hskip 1em plus 0.5em minus 0.4em\relax Cambridge university press, 2005.

\bibitem{VonNeumann_trace}
L.~Mirsky, ``A trace inequality of john von neumann,'' \emph{Monatshefte f{\"u}r mathematik}, vol.~79, no.~4, pp. 303--306, 1975.

\bibitem{nguyen2023depth}
D.~Nguyen, ``An in depth introduction to variational bayes note,'' \emph{Available at SSRN 4541076}, 2023.

\bibitem{Digamma_Log}
D.~W. Lozier, ``Nist digital library of mathematical functions,'' \emph{Annals of Mathematics and Artificial Intelligence}, vol.~38, no.~1, pp. 105--119, 2003.

\end{thebibliography}
% Generated by IEEEtran.bst, version: 1.14 (2015/08/26)

\end{document}